\newtheorem{thm}{Theorem}
\newtheorem{lem}{Lemma}
\newcommand{\E}{\mathbb{E}}
\newcommand{\Prob}{\mathsf{P}}
\date{}
\begin{document}

\mainmatter

\title{Global clustering coefficient in scale-free \\ weighted and unweighted networks}

\titlerunning{Global clustering coefficient in scale-free networks}

\author{Liudmila~ Ostroumova Prokhorenkova\inst{1}\inst{2}}

\authorrunning{L. Ostroumova Prokhorenkova}

\institute{Yandex, Moscow, Russia
\and
Moscow State University, Moscow, Russia}

\maketitle

\begin{abstract}
In this paper, we present a detailed analysis of the global clustering coefficient in scale-free graphs.
Many observed real-world networks of diverse nature have a power-law degree distribution.
Moreover, the observed degree distribution usually has an infinite variance.
Therefore, we are especially interested in such degree distributions.
In addition, we analyze the clustering coefficient for both weighted and unweighted graphs.

There are two well-known definitions of the clustering coefficient of a graph: the global and the average local clustering coefficients.
There are several models proposed in the literature for which the average local clustering coefficient tends to a positive constant as a graph grows.
On the other hand, there are no models of scale-free networks with an infinite variance of the degree distribution and with an asymptotically constant global clustering coefficient.
Models with constant global clustering and finite variance were also proposed.
Therefore, in this paper we focus only on the most interesting case: we analyze the global clustering coefficient for graphs with an infinite variance of the degree distribution.

For unweighted graphs, we prove that the global clustering coefficient tends to zero with high probability and we also estimate the largest possible clustering coefficient for such graphs.
On the contrary, for weighted graphs, the constant global clustering coefficient can be obtained even for the case of an infinite variance of the degree distribution.

\end{abstract}

\section{Introduction}\label{sec:intro}

In this paper, we analyze the global clustering coefficient of graphs with a power-law degree distribution.
Namely, we consider a sequence of graphs with degree distributions following a regularly varying distribution $F$.
It was previously shown in \cite{GlobalClust} that if a graph has a power-law degree distribution with an infinite variance, then the global clustering coefficient tends to zero with high probability.
Namely, an upper bound for the number of triangles is obtained in~\cite{GlobalClust}.
In addition, the constructing procedure which allows to obtain the sequence of graphs with a superlinear number of triangles is presented.
However, the number of triangles in the constructed graphs grows slower than the upper bound obtained.
In this paper, we close this gap by improving the upper bound obtained in \cite{GlobalClust}.
Moreover, we also analyze graphs with multiple edges and show that weighted scale-free graphs with asymptotically constant global clustering coefficient and with an infinite variance of the degree distribution do exist.

The rest of the paper is organized as follows.
In the next section, we discuss several definitions of the clustering coefficient for weighted and unweighted graphs.
Then, in Section~\ref{sec:graph}, we formally define our restriction on a sequence of graphs.
In Sections~\ref{sec:unweighted} and~\ref{sec:weighted}, we analyze the global clustering coefficient for the unweighted and the weighted case respectively.
Section~\ref{sec:conclusion} concludes the paper.

\section{Clustering coefficients}\label{sec:clustering}


There are two well-known definitions of the clustering coefficient \cite{Math_Results,Newman} of an unweighted graph. The \emph{global clustering coefficient} $C_1(G_n)$ is the ratio of three times the number of triangles to the number of pairs of adjacent edges in $G_n$. The \emph{average local clustering coefficient} is defined as follows: $C_2(G_n) = \frac{1}{n} \sum_{i=1}^n C(i)$, where $C(i)$ is the local clustering coefficient for a vertex $i$: $C(i) = \frac{T^i}{P_2^i}$, where $T^i$ is the number of edges between the neighbors of the vertex $i$ and $P_2^i$ is the number of pairs of neighbors.
Note that both clustering coefficients equal $1$ for a complete graph.

It was mentioned in \cite{Math_Results,Newman} that in research papers either the average local or the global clustering coefficients are considered, and it is not always clear which definition is used.
On the other hand, these two clustering coefficients differ: e.g.,
it was demonstrated in \cite{GPA} that for networks based on the idea of preferential attachment the difference between these two clustering coefficients is crucial.

It is also reasonable to study the global clustering coefficient for graphs with multiple edges. This agrees well with reality, for example, the Web host graph has a lot of multiple edges: there can be several edges between the pages of two hosts.
And even in the Internet graph (vertices are web pages and edges are links between them) multiple edges occur.

We refer to the paper \cite{MultClust} for the definition of the global clustering coefficient for weighted graphs.
They propose the following generalization of the global clustering coefficient to multigraphs:
$$
C_1(G) = \frac{\text{total value of closed triplets}}{\text{total value of triplets}}\,.
$$

There are several ways to define the value of a triplet.
First, the triplet value can be defined as the \textit{arithmetic mean} of
the weights of the ties that make up the triplet.
Second, it can be defined as the \textit{geometric mean} of the weights of the ties.
Third, it can be defined as the \textit{maximum or minimum value} of the weights of the ties.
In addition to these methods proposed in \cite{MultClust}, we also propose the following natural definition of the weight: the weight of a triplet is the \textit{product} of the weights of the ties. This definition agrees with the following property: the total value of all triplets located in a vertex is close to its degree squared.


\section{Scale-free graphs}\label{sec:graph}

We consider a sequence of graphs $\{G_n\}$.
Each graph $G_n$ has $n$ vertices.
As in~\cite{GlobalClust}, we assume that the degrees of the vertices are independent random variables following a {\it regularly varying} distribution with a cumulative distribution function $F$ satisfying
\begin{equation}\label{eq:regular}
1-F(x)=L(x)x^{-\gamma},\quad x>0,
\end{equation}
where $L(\cdot)$ is a slowly varying function, that is, for any fixed constant $t>0$
\[\lim_{x\to\infty}\frac{L(tx)}{L(x)}=1.\]
There is another obvious restriction on the function $L(\cdot)$: the function $1-L(x)x^{-\gamma}$ must be a cumulative distribution function of a random variable taking positive integer values with probability 1.

Note that Equation~\eqref{eq:regular} describes a broad class of heavy-tailed distributions without imposing the rigid Pareto assumption.
The power-law distribution with parameter $\gamma + 1$ corresponds to the cumulative distribution $1-F(x)=L(x) x^{-\gamma}$.
Further by $\xi, \xi_1, \xi_2, \ldots$ we denote random variables with the distribution $F$.
Note that for any $\alpha < \gamma$ the moment $\E \xi^{\alpha}$ is finite.



Models with $\gamma > 2$ and with the global clustering coefficient tending to some positive constant were already proposed (see, e.g., \cite{GPA}).
Therefore, in this paper we consider only the case $1<\gamma < 2$.

One small problem remains: we can construct a graph with a given degree distribution only if the sum of all degrees is even. This problem is easy to solve: we can either regenerate
the degrees until their sum is even or we can add 1 to the last variable if their sum is odd \cite{Configuration}.
For the sake of simplicity we choose the second option, i.e., if $\sum_{i=1}^n \xi_i$ is odd, then we replace $\xi_n$ by $\xi_n+1$.
It is easy to see that this modification does not change any of our results, therefore,
further we do not focus on the evenness.

\section{Auxiliary results}

In this section, we prove several auxiliary lemmas.
These lemmas generalize several results from \cite{GlobalClust}.
In order to prove these lemmas we use the following theorem (see, e.g., \cite{RegVar}).

\begin{thm}[Karamata's theorem]\label{thm:karamata}
Let $L$ be slowly varying and locally bounded in $[ x_0,\infty ]$ for some $x_0 \geq 0$. Then

\begin{enumerate}
\item for $\alpha > -1$
    $$
        \int_{x_0}^x t^{\alpha} L(t) dt  = (1+o(1))  (\alpha +1)^{-1} x^{\alpha+1} L(x),\,\,\,\, x\to\infty\,.
    $$
\item for $\alpha < -1$
    $$
        \int_{x}^{\infty} t^{\alpha} L(t) dt   = - (1+o(1)) (\alpha +1)^{-1} x^{\alpha+1} L(x),\,\,\,\, x\to\infty\,.
    $$
\end{enumerate}
\end{thm}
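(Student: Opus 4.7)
My plan is to prove both parts by the standard change of variables $t = xu$ and reduce each statement to showing that a certain normalized integral of $L(xu)/L(x)$ converges to the right constant. For part 1, the substitution gives
\begin{equation*}
\int_{x_0}^{x} t^{\alpha} L(t)\,dt \;=\; x^{\alpha+1}\int_{x_0/x}^{1} u^{\alpha}\,\frac{L(xu)}{L(x)}\,L(x)\,\frac{du}{L(x)}\cdot L(x) \;=\; x^{\alpha+1} L(x) \int_{x_0/x}^{1} u^{\alpha}\,\frac{L(xu)}{L(x)}\,du,
\end{equation*}
and for part 2 the analogous substitution on $[x,\infty)$ produces $x^{\alpha+1}L(x)\int_{1}^{\infty} u^{\alpha} L(xu)/L(x)\,du$. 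Since $\int_{0}^{1} u^{\alpha}\,du = (\alpha+1)^{-1}$ when $\alpha>-1$ and $\int_{1}^{\infty} u^{\alpha}\,du = -(\alpha+1)^{-1}$ when $\alpha<-1$, it suffices in each case to prove that the inner integral converges to the corresponding constant as $x\to\infty$.

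The core tool I would use is the uniform convergence property of slowly varying functions: for any compact interval $[a,b]\subset(0,\infty)$, $L(xu)/L(x)\to 1$ uniformly in $u\in[a,b]$. On such intervals, pointwise convergence of the integrand combined with uniform boundedness lets me pass to the limit termwise and recover the correct constants over $[\delta,1]$ in part 1 and over $[1,M]$ in part 2.

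The main obstacle is controlling the portions of the domain where uniform convergence is unavailable, namely $u\in(x_0/x,\delta)$ in part 1 and $u\in(M,\infty)$ in part 2. To close these gaps I would invoke Potter's bounds: for every $\varepsilon>0$ there is an $x_\varepsilon$ such that, for $x\geq x_\varepsilon$ and $u$ in the relevant range, $L(xu)/L(x) \leq C\max(u^{\varepsilon}, u^{-\varepsilon})$. Choosing $\varepsilon$ small enough so that $\alpha+\varepsilon>-1$ in part 1 (respectively $\alpha-\varepsilon<-1$ in part 2), the integrand is dominated by an integrable function near the singular endpoint, and the tail contribution can be made arbitrarily small uniformly in $x$. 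An $\varepsilon/3$-type argument then finishes the proof: split the domain into the uniform-convergence piece and the Potter-bound piece, show each contributes the claimed asymptotic amount, and combine.

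The delicate step is this Potter-type domination. The uniform convergence theorem by itself is insufficient because slow variation allows $L$ to wander significantly on non-compact scales; without a quantitative polynomial envelope on $L(xu)/L(x)$, one cannot rule out pathological contributions from $u$ very close to $0$ or very large. Once Potter's bounds are in hand, however, the argument reduces to standard dominated convergence and the theorem follows in both regimes simultaneously.
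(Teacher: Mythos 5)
The paper does not prove this statement at all: Karamata's theorem is quoted as a classical result with a pointer to Bingham--Goldie--Teugels, so there is no in-paper argument to compare against. Your sketch is the standard textbook proof (substitute $t=xu$, use the uniform convergence theorem on compact subsets of $(0,\infty)$, and control the remaining range with Potter's bounds), and as a plan it is essentially sound. Two small points need fixing when you write it out. First, your exponent conditions are stated backwards and as written are vacuous: for $u\le 1$ Potter's bound gives $L(xu)/L(x)\le C u^{-\varepsilon}$, so the integrand near $0$ is dominated by $Cu^{\alpha-\varepsilon}$ and you need $\alpha-\varepsilon>-1$ (not $\alpha+\varepsilon>-1$, which holds automatically); symmetrically, for $u\ge 1$ the bound is $Cu^{\alpha+\varepsilon}$ and you need $\alpha+\varepsilon<-1$ in part 2. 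Second, the two-variable form of Potter's bounds requires both arguments $x$ and $xu$ to exceed some threshold $X_\varepsilon$, so in part 1 it does not cover $u\in(x_0/x,\,X_\varepsilon/x)$; there you should instead observe that $\int_{x_0}^{X_\varepsilon}t^{\alpha}L(t)\,dt$ is a finite constant by the local boundedness hypothesis, hence negligible against $x^{\alpha+1}L(x)\to\infty$. With those repairs the dominated-convergence / $\varepsilon$-splitting argument goes through in both regimes.
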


We also use the following known lemma (its proof can be found, e.g., in~\cite{fresh_model}).

\begin{lem}\label{lem:moments}
Let $\xi_1, \dots, \xi_n$ be mutually independent random variables, $\E \xi_i = 0$, $\E |\xi_i|^{\alpha} < \infty$, $1\le \alpha \le 2$, then
$$
\E \left( |\xi_1 + \ldots + \xi_n|^{\alpha} \right) \le 2^{\alpha} \left(\E\left(|\xi_1|^{\alpha}\right)+\ldots+\E\left(|\xi_n|^{\alpha}\right) \right)\,.
$$
\end{lem}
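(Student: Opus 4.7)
The inequality is a version of the von Bahr--Esseen bound, and I would prove it by a symmetrization step followed by a Rademacher sign reduction. Let $S_n = \xi_1+\cdots+\xi_n$, and introduce an independent copy $(\xi_1',\ldots,\xi_n')$ of $(\xi_1,\ldots,\xi_n)$, setting $S_n' = \xi_1'+\cdots+\xi_n'$. Since $\E S_n' = 0$ and $S_n'$ is independent of $S_n$, the conditional Jensen inequality applied to the convex function $x \mapsto |x|^\alpha$ (convex because $\alpha \ge 1$) gives
$$\E|S_n|^\alpha \;=\; \E\,\bigl|\E(S_n - S_n' \mid S_n)\bigr|^\alpha \;\le\; \E|S_n - S_n'|^\alpha\,.$$

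Each $\eta_i := \xi_i - \xi_i'$ is symmetric, and the $\eta_i$ are mutually independent, so $(\eta_1,\ldots,\eta_n)$ has the same joint law as $(\epsilon_1 \eta_1,\ldots,\epsilon_n \eta_n)$, where $\epsilon_1,\ldots,\epsilon_n$ are independent Rademacher signs drawn independently of the $\eta_i$. Conditioning on $(\eta_i)$, the task reduces to showing that for any deterministic reals $a_1,\ldots,a_n$,
$$\E_\epsilon \Bigl|\sum_{i=1}^n \epsilon_i a_i\Bigr|^\alpha \;\le\; \sum_{i=1}^n |a_i|^\alpha\,.$$
For this I would use Jensen in the opposite direction, applied to the concave function $x \mapsto x^{\alpha/2}$ on $[0,\infty)$ (concave because $\alpha \le 2$), together with the monotonicity of $\ell_p$-norms:
$$\E_\epsilon \Bigl|\sum_i \epsilon_i a_i \Bigr|^\alpha \;\le\; \Bigl(\E_\epsilon \Bigl(\sum_i \epsilon_i a_i\Bigr)^{\!2}\Bigr)^{\!\alpha/2} \;=\; \Bigl(\sum_i a_i^2\Bigr)^{\!\alpha/2} \;\le\; \sum_i |a_i|^\alpha\,,$$
the last inequality valid since $\alpha \le 2$.

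Combining the two stages and taking expectations over the $\eta_i$ yields $\E|S_n|^\alpha \le \sum_i \E|\xi_i - \xi_i'|^\alpha$. A concluding application of the elementary bound $(|x|+|y|)^\alpha \le 2^{\alpha-1}(|x|^\alpha + |y|^\alpha)$ (once more the convexity of $t \mapsto t^\alpha$ for $\alpha \ge 1$) gives $\E|\xi_i - \xi_i'|^\alpha \le 2^\alpha\,\E|\xi_i|^\alpha$, and summing over $i$ produces the claimed inequality.

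The step I expect to be the main obstacle is the middle Rademacher estimate: the two Jensen applications deliberately point in opposite directions, one using convexity of $|\cdot|^\alpha$ (which requires $\alpha \ge 1$) and the other using concavity of $(\cdot)^{\alpha/2}$ (which requires $\alpha \le 2$), and one must be careful to use each hypothesis at exactly the right place in order for the constant $2^\alpha$ to emerge cleanly. Everything else reduces to standard symmetrization and the triangle-type bound $(|x|+|y|)^\alpha \le 2^{\alpha-1}(|x|^\alpha + |y|^\alpha)$.
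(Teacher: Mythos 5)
Your proof is correct, and every step checks out: the conditional Jensen symmetrization $\E|S_n|^\alpha \le \E|S_n-S_n'|^\alpha$, the distributional identity $(\eta_i)\stackrel{d}{=}(\epsilon_i\eta_i)$ for the symmetric differences, the Rademacher estimate via concavity of $t\mapsto t^{\alpha/2}$ and the norm comparison $\bigl(\sum a_i^2\bigr)^{\alpha/2}\le\sum|a_i|^\alpha$, and the final desymmetrization through $(|x|+|y|)^\alpha\le 2^{\alpha-1}(|x|^\alpha+|y|^\alpha)$, which together deliver exactly the constant $2^\alpha$. There is, however, nothing in the paper to compare this against: the authors do not prove the lemma at all, stating only that ``its proof can be found, e.g., in \cite{fresh_model}.'' So you have supplied a self-contained argument where the paper defers to a reference. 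What you have written is the classical von Bahr--Esseen proof, and your closing remark correctly identifies where each hypothesis is spent: $\alpha\ge 1$ is needed for the two convexity steps (conditional Jensen and the $2^{\alpha-1}$ bound), while $\alpha\le 2$ is needed for the concave Jensen step and the $\ell_2$-to-$\ell_\alpha$ comparison. The only cosmetic caveat is that one should note $\E|\xi_i-\xi_i'|^\alpha<\infty$ (which follows from the same $2^{\alpha-1}$ bound) before invoking conditional Jensen, but this is immediate from the hypotheses.
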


We need the following notation:
$$
S_{n,c}(x) = \sum_{i=1}^n \xi_i^c I\left[\xi_i>x\right]\,,
$$
$$
\bar S_{n,c}(x) = \sum_{i=1}^n \xi_i^c I\left[\xi_i\le x\right]\,,
$$
here $c,x \ge 0$.

\begin{lem}\label{lem:1}
Fix any $c$ such that $0\le c<\gamma$, any $\beta$ such that $1 < \beta < \gamma/c$ and $\beta \le 2$, and any $\varepsilon > 0$.
Then for any $x = x(n)>0$ such that $x(n) \to \infty$ we have
$$
\E S_{n,c}(x) = \frac{\gamma}{\gamma-c}n\,x^{c - \gamma} L\left(x\right)(1+o(1)),\,\,\, n \to \infty\,,
$$
$$
\Prob \left( |S_{n,c}(x) - \E S_{n,c}(x)| > \varepsilon \, \E S_{n,c}(x) \right) = O \left( \left( \frac {x^\gamma}{n \, L\left(x\right)} \right)^{\beta-1} \right)\,.
$$
\end{lem}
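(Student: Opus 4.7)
The plan is to handle the two assertions in sequence: first extract the leading asymptotic for $\E S_{n,c}(x)$ via integration by parts and Karamata, then bound the deviation probability by a moment inequality (Lemma~\ref{lem:moments}) combined with Markov.

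For the expectation, since $\xi$ takes positive integer values (equivalently, a positive random variable with tail $L(x)x^{-\gamma}$), I would write
\[
\E\bigl[\xi_1^c\, I[\xi_1>x]\bigr] = \int_x^\infty t^c\, dF(t) = x^c(1-F(x)) + c\int_x^\infty t^{c-1}(1-F(t))\,dt
\]
by integration by parts, so the right-hand side equals
\[
x^{c-\gamma}L(x) + c\int_x^\infty t^{c-\gamma-1} L(t)\,dt.
\]
Now $c-\gamma-1<-1$ because $c<\gamma$, so Karamata part (2) with $\alpha=c-\gamma-1$ gives
\[
\int_x^\infty t^{c-\gamma-1} L(t)\,dt = (1+o(1))\,\frac{x^{c-\gamma}L(x)}{\gamma-c},\quad x\to\infty.
\]
Adding and multiplying by $n$ produces the factor $\gamma/(\gamma-c)$ and gives the first claim.

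For the concentration bound, set $Y_i := \xi_i^c I[\xi_i>x] - \E[\xi_i^c I[\xi_i>x]]$. These are i.i.d.\ and centered, and $1<\beta\le 2$, so Lemma~\ref{lem:moments} yields
\[
\E\bigl|S_{n,c}(x)-\E S_{n,c}(x)\bigr|^{\beta} \le 2^{\beta} n\,\E|Y_1|^{\beta}.
\]
Using $\E|Y_1|^{\beta}\le 2^{\beta}\E\bigl[\xi_1^{c\beta} I[\xi_1>x]\bigr]$ (triangle inequality and $\beta\ge 1$), and then applying the first part with $c$ replaced by $c\beta$ (this is where I need $\beta<\gamma/c$, so that $c\beta<\gamma$ and Karamata is still applicable), I obtain
\[
\E|Y_1|^{\beta} = O\bigl(x^{c\beta-\gamma}L(x)\bigr).
\]
Markov's inequality at the exponent $\beta$ then gives
\[
\Prob\!\left(|S_{n,c}(x)-\E S_{n,c}(x)|>\varepsilon\,\E S_{n,c}(x)\right) \le \frac{\E|S_{n,c}(x)-\E S_{n,c}(x)|^{\beta}}{\varepsilon^{\beta}\bigl(\E S_{n,c}(x)\bigr)^{\beta}} = O\!\left(\frac{n\,x^{c\beta-\gamma}L(x)}{n^{\beta}x^{(c-\gamma)\beta}L(x)^{\beta}}\right),
\]
and the exponents of $x$, $L(x)$, and $n$ collapse to the claimed $\bigl(x^\gamma/(n L(x))\bigr)^{\beta-1}$.

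The only nontrivial point is keeping the hypotheses straight: $c<\gamma$ is what makes the first Karamata application legal, $\beta<\gamma/c$ (interpreted as no restriction when $c=0$) is what keeps $c\beta<\gamma$ so the same Karamata argument applies to the $\beta$-th absolute moment, and $1<\beta\le 2$ is what lets me use Lemma~\ref{lem:moments}. Everything else is bookkeeping on powers of $x$, $L(x)$, and $n$, with the $(1+o(1))$ factors absorbed into the $O(\cdot)$ on the right-hand side.
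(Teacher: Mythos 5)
Your proposal is correct and follows essentially the same route as the paper: integration by parts plus Karamata's theorem for the expectation, and Markov's inequality at exponent $\beta$ combined with Lemma~\ref{lem:moments} and the $c\beta<\gamma$ moment bound for the concentration estimate. The only cosmetic difference is that you make the centering and the bound $\E|Y_1|^\beta=O(x^{c\beta-\gamma}L(x))$ explicit, which the paper leaves implicit.
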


\begin{proof}

We now assume that $c>0$.

First, we estimate the expectation of $S_{n,c}(x)$:
\begin{multline*}
\E S_{n,c}(x)
= n \int_{x}^{\infty} t^c d F(t)
= -n \int_{x}^{\infty} t^c \, d (1- F(t))
\\
= - n \, t^c (1- F(t)) \bigg|_{x}^{\infty} + n \, c \int_{x}^{\infty} t^{c-1} (1-F(t)) \, d t
\\
= n \, x^{c - \gamma}  L\left(x\right) + n \, c \int_{x}^{\infty} t^{c - 1 -\gamma}L(t) \, d t
\\
\sim n \, x^{c - \gamma} L\left(x\right)
- n \,c  (c -\gamma)^{-1} x^{c-\gamma}  L\left(x\right)
= \frac{\gamma}{\gamma-c}n \, x^{c - \gamma} L\left(x\right)\,.
\end{multline*}
Then, we estimate
$$
\E \left(\xi^c I\left[\xi > x\right]\right)^\beta
= \frac{1}{n} \E S_{n,c\beta}(x) \sim
\frac{\gamma}{\gamma -\beta c}x^{\beta c -\gamma} L\left(x\right)
$$
and get
\begin{multline*}
\Prob \left( |S_{n,c}(x) - \E S_{n,c}(x)| > \varepsilon \, \E S_{n,c}(x) \right) \le
\frac{\E |S_{n,c}(x) - \E S_{n,c}(x)|^\beta}{(\varepsilon \, \E S_{n,c}(x))^\beta}
\\
= O\left( \frac{ n \E \left(\xi^c I\left[\xi > x\right]\right)^\beta}{(\E S_{n,c}(x))^\beta} \right)
= O \left( \frac{n \, x^{\beta c-\gamma} L\left(x\right)}{n^\beta x^{\beta(c - \gamma)} \left(L\left(x\right)\right)^\beta} \right)
\\
= O \left( n^{1-\beta}x^{-(1-\beta)\gamma} \left(L\left(x\right)\right)^{1-\beta} \right)\,.
\end{multline*}

The case $c=0$ can be considered similarly:
$$
\E S_{n,0}(x)
= n \Prob(\xi > x)
= n \, x^{- \gamma} L\left(x\right)\,,
$$
\begin{multline*}
\Prob \left( |S_{n,0}(x) - \E S_{n,0}(x)| > \varepsilon \, \E S_{n,0}(x) \right)
= O\left( \frac{ n x^{- \gamma} L\left(x\right)}{(n \, x^{- \gamma} L\left(x\right))^\beta} \right) \\
= O \left( \left( \frac {x^\gamma}{n \, L\left(x\right)} \right)^{\beta-1} \right)\,.
\end{multline*}

\end{proof}

\begin{lem}\label{lem:2}
Fix any $c$ such that $c>\gamma$
and any $\varepsilon > 0$.
Then for any $x = x(n)>0$ such that $x(n) \to \infty$ we have
$$
\E \bar S_{n,c}(x) = \frac{\gamma}{c - \gamma}n \, x^{c - \gamma} L\left(x\right)(1+o(1)), \,\,\, n\to \infty\,,
$$
$$
\Prob \left( |\bar S_{n,c}(x) - \E \bar S_{n,c}(x)| > \varepsilon \, \E \bar S_{n,c}(x) \right) =
O \left( \frac {x^\gamma}{n \, L\left(x\right)} \right)\,.
$$
\end{lem}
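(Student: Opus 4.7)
The plan is to follow the template of the proof of Lemma~\ref{lem:1}, making two small adjustments to accommodate the truncated sum and the regime $c>\gamma$.

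\emph{Expectation.} I would start from $\E \bar S_{n,c}(x) = n\int_0^x t^c\,dF(t) = -n\int_0^x t^c\,d(1-F(t))$ and integrate by parts. The boundary term at $t=x$ equals $-n\,x^c(1-F(x)) = -n\,x^{c-\gamma} L(x)$; the term at $t=0$ vanishes because $\xi$ is positive-integer-valued (so $1-F(t)=1$ for small $t$) and $c>\gamma>0$. What remains is $n c \int_0^x t^{c-1-\gamma} L(t)\,dt$. Since $c>\gamma$ we have $c-1-\gamma > -1$, so Karamata's theorem (part~1) applied on $[x_0,x]$ gives $(1+o(1))\tfrac{c}{c-\gamma}\, x^{c-\gamma} L(x)$, while the fixed residual $\int_0^{x_0} t^{c-1-\gamma}L(t)\,dt$ is a bounded $O(1)$ that is absorbed into the $(1+o(1))$ factor since $x^{c-\gamma}L(x)\to\infty$. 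Adding the boundary contribution and simplifying $\tfrac{c}{c-\gamma}-1=\tfrac{\gamma}{c-\gamma}$ yields the claimed asymptotic.

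\emph{Concentration.} Because $c>\gamma$, the second moment $\E \xi^{2c} I[\xi\le x]$ is governed by the same Karamata integral with $c$ replaced by $2c$; the computation from the previous step (now with $2c>\gamma$) gives $\E \xi^{2c} I[\xi\le x] \sim \tfrac{\gamma}{2c-\gamma}\, x^{2c-\gamma} L(x)$. Applying Chebyshev's inequality (equivalently, Lemma~\ref{lem:moments} with $\alpha=2$ applied to the centered variables $\xi_i^c I[\xi_i\le x] - \E \xi_i^c I[\xi_i\le x]$) bounds the deviation probability by
$$
\frac{n\,\E \xi^{2c} I[\xi\le x]}{(\varepsilon\,\E\bar S_{n,c}(x))^2}
= O\!\left(\frac{n\,x^{2c-\gamma}L(x)}{n^2\,x^{2(c-\gamma)}L(x)^2}\right)
= O\!\left(\frac{x^{\gamma}}{n\,L(x)}\right),
$$
which is exactly the stated tail bound.

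\emph{Main obstacle.} There is no genuine obstacle, only the mildly technical point that Karamata's theorem controls integrals on some interval $[x_0,\infty)$ where $L$ is locally bounded, not on $[0,x]$. Since $L$ need not be defined or well-behaved near $0$, one must split the integral at $x_0$ and argue that the bounded lower piece is dominated by the divergent main term $x^{c-\gamma} L(x)$. Once this splitting is carried out, the proof is a direct transcription of the argument already used for Lemma~\ref{lem:1}, with the case $c>\gamma$ conveniently allowing $\beta=2$ (plain Chebyshev) instead of the sub-quadratic moment inequality.
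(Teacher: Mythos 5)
Your proposal is correct and follows essentially the same route as the paper's proof: integration by parts plus Karamata's theorem (part 1) for the expectation, and Chebyshev with the second moment $\E\,\xi^{2c}I[\xi\le x]\sim\frac{\gamma}{2c-\gamma}x^{2c-\gamma}L(x)$ for the concentration bound. Your extra care about splitting the Karamata integral at $x_0$ and handling the boundary term at $t=0$ is a minor technical refinement the paper glosses over, not a different argument.
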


\begin{proof}

Again, first we estimate the expectation of $\bar S_{n,c}(x)$:
\begin{multline*}
\E \bar S_{n,c}(x)
= n \int_{0}^{x} t^c d F(t)
= -n \int_0^{x} t^c \, d (1- F(t))
\\
= - n \, t^c (1- F(t)) \bigg|_0^{x} + n \, c \int_0^{x} t^{c-1} (1-F(t)) \, d t
\\
= - n \, x^{c - \gamma}  L\left(x\right) + n \, c \int_0^{x} t^{c - 1 -\gamma}L(t) \, d t
\\
\sim - n \, x^{c - \gamma}  L\left(x\right)
+ n \,c  (c -\gamma)^{-1} x^{c-\gamma}  L\left(x\right)
= \frac{\gamma}{c - \gamma}n \, x^{c - \gamma} L\left(x\right)\,.
\end{multline*}
Then, we estimate
$$
\E \left(\xi^c I\left[\xi \le x\right]\right)^2
= \frac{1}{n} \bar S_{n,2c}(x)
\sim 
\frac{\gamma}{2 c - \gamma}x^{2 c -\gamma} L\left(x\right)\,
$$
and get
\begin{multline*}
\Prob \left( |\bar S_{n,c}(x) - \E \bar S_{n,c}(x)| > \varepsilon \, \E \bar S_{n,c}(x) \right) \le
\frac{\E |\bar S_{n,c}(x) - \E \bar S_{n,c}(x)|^2}{( \varepsilon \, \E \bar S_{n,c}(x))^2}
\\
= O\left( \frac{ n \E \left(\xi^c I\left[\xi \le x \right]\right)^2}{(\E \bar S_{n,c}(x))^2} \right)
= O \left( \frac{n\, x^{2 c-\gamma} L\left(x\right)}{n^{2}x^{2(c - \gamma)} \left(L\left(x\right)\right)^2} \right)
= O \left( \frac {x^\gamma}{n \, L\left(x\right)} \right)\,.
\end{multline*}

\end{proof}

We prove two more lemmas.
Put $\xi_{max} = \max\{\xi_1, \ldots, \xi_n\}$.

\begin{lem}\label{lem:ximax}
For any $\varepsilon > 0$ and any $\alpha > 0$
$$
\Prob\left(\xi_{max} > n^{\frac{1}{\gamma}-\varepsilon}\right) = 1 - O\left(n^{-\alpha}\right)\,.
$$
Also, for any $\delta < \gamma\varepsilon$
$$
\Prob\left(\xi_{max} \le n^{\frac{1}{\gamma}+\varepsilon}\right) = 1 - O\left(n^{-\delta}\right)\,.
$$
\end{lem}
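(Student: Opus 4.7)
The plan is to treat the two bounds separately, using independence of the $\xi_i$'s for the lower bound and a union bound for the upper bound, and to handle the slowly varying factor $L(\cdot)$ in both via the standard fact that any slowly varying function grows (and decays) slower than any power of its argument.

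For the first part, I would use $\Prob(\xi_{\max}\le y)=F(y)^n=\bigl(1-L(y)y^{-\gamma}\bigr)^n\le \exp\bigl(-n L(y)y^{-\gamma}\bigr)$ with $y=n^{1/\gamma-\varepsilon}$. Plugging in gives $n L(y) y^{-\gamma}=n^{\gamma\varepsilon}L\bigl(n^{1/\gamma-\varepsilon}\bigr)$. Since $L$ is slowly varying, for any fixed $\eta>0$ we have $L(x)\ge x^{-\eta}$ for $x$ large enough (Potter's bounds); choosing $\eta$ so small that $\gamma\varepsilon-\eta(1/\gamma-\varepsilon)>0$ shows that $n^{\gamma\varepsilon}L\bigl(n^{1/\gamma-\varepsilon}\bigr)\to\infty$ faster than any power of $\log n$. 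Consequently $\Prob(\xi_{\max}\le n^{1/\gamma-\varepsilon})=\exp\bigl(-n^{\Omega(1)}\bigr)$, which is $O(n^{-\alpha})$ for every $\alpha>0$.

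For the second part, the natural tool is the union bound,
\[
\Prob\bigl(\xi_{\max}>n^{1/\gamma+\varepsilon}\bigr)\le n\,\Prob\bigl(\xi>n^{1/\gamma+\varepsilon}\bigr)=n\cdot L\bigl(n^{1/\gamma+\varepsilon}\bigr)\cdot n^{-1-\gamma\varepsilon}=n^{-\gamma\varepsilon}L\bigl(n^{1/\gamma+\varepsilon}\bigr).
\]
Using the upper Potter bound $L(x)\le x^{\eta}$ for $x$ large (any $\eta>0$), we obtain $L\bigl(n^{1/\gamma+\varepsilon}\bigr)\le n^{\eta(1/\gamma+\varepsilon)}$, hence the whole expression is at most $n^{-\gamma\varepsilon+\eta(1/\gamma+\varepsilon)}$. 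Given $\delta<\gamma\varepsilon$, I pick $\eta$ small enough that $\eta(1/\gamma+\varepsilon)\le \gamma\varepsilon-\delta$, which yields the desired $O(n^{-\delta})$ bound.

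Neither half presents a serious obstacle; the only delicate point is remembering that one must absorb $L$ into a power of $n$ rather than treating it as a constant. For the lower bound one needs $L$ to not decay too fast (covered by the lower Potter bound), while the upper bound needs $L$ to not grow too fast (covered by the upper Potter bound), and the restriction $\delta<\gamma\varepsilon$ in the statement is precisely what provides the slack to absorb this slowly varying factor.
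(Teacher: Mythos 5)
Your proposal is correct and follows essentially the same route as the paper: the exact computation of $\Prob(\xi_{\max}\le y)=F(y)^n$ combined with an exponential bound for the first claim, and a union bound for the second. The only difference is that you make explicit (via Potter's bounds) the step of absorbing the slowly varying factor $L$ into a small power of $n$, which the paper leaves implicit.
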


\begin{proof}

\begin{multline*}
\Prob\left(\xi_{max} \le n^{\frac{1}{\gamma}-\varepsilon}\right)
= \left[ \Prob\left(\xi \le n^{\frac{1}{\gamma}-\varepsilon}\right) \right]^n
= \exp\left(n \log \left(1 - \Prob(\xi > n^{\frac{1}{\gamma}-\varepsilon})\right)\right)
\\
= \exp\left(n \log \left(1 - L\left(n^{\frac{1}{\gamma}-\varepsilon}\right) n^{-\gamma\left(\frac{1}{\gamma}-\varepsilon\right)}\right)\right)
\\
= \exp\left(- n \, L\left(n^{\frac{1}{\gamma}-\varepsilon}\right) n^{-\gamma\left(\frac{1}{\gamma}-\varepsilon\right)} (1 + o(1))\right)
\\
= \exp\left(-L\left(n^{\frac{1}{\gamma}-\varepsilon}\right) n^{\gamma\varepsilon} (1 + o(1))\right) = O\left(n^{-\alpha}\right)\,,
\end{multline*}
\begin{multline*}
\Prob(\xi_{max} > n^{\frac{1}{\gamma}+\varepsilon})
\le n \, \Prob\left(\xi > n^{\frac{1}{\gamma}+\varepsilon}\right)
\le n \, L\left(n^{\frac{1}{\gamma} + \varepsilon}\right) n^{-\gamma\left(\frac{1}{\gamma}+\varepsilon \right)}
= O\left(n^{-\delta}\right).
\end{multline*}

\end{proof}

\begin{lem}\label{lem:p2upper}
For any $\varepsilon > 0$  and any $\delta < \frac{\gamma\varepsilon}{\gamma+2}$
$$
\Prob \left( \bar S_{n,2}(\infty) \le n^{\frac{2}{\gamma} + \varepsilon} \right) = 1 - O\left(n^{-\delta}\right)\,.
$$
\end{lem}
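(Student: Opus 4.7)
The plan is to combine a truncation argument with Markov's inequality. Since $1<\gamma<2$, the second moment $\E \xi^2$ is infinite, so there is no hope of applying Chebyshev directly to $\bar S_{n,2}(\infty)$; indeed, the concentration tail in Lemma~\ref{lem:2} becomes vacuous once $x$ is pushed past $n^{1/\gamma}$. The key idea is therefore to cut the sum at a threshold $x$ slightly above the typical maximum degree, so that $\bar S_{n,2}(\infty)=\bar S_{n,2}(x)$ with high probability, and then control $\bar S_{n,2}(x)$ only in expectation.

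More precisely, I would fix an auxiliary parameter $\eta>0$, put $x=n^{1/\gamma+\eta}$, and use the inclusion
\[
\bigl\{\bar S_{n,2}(\infty)>n^{2/\gamma+\varepsilon}\bigr\}\subseteq\bigl\{\xi_{max}>x\bigr\}\cup\bigl\{\bar S_{n,2}(x)>n^{2/\gamma+\varepsilon}\bigr\}.
\]
By Lemma~\ref{lem:ximax}, the first event has probability $O(n^{-\gamma\eta+o(1)})$. For the second, Lemma~\ref{lem:2} with $c=2>\gamma$ gives
\[
\E\bar S_{n,2}(x)\sim\frac{\gamma}{2-\gamma}\,n\,x^{2-\gamma}L(x)=n^{2/\gamma+(2-\gamma)\eta}L(x)\cdot O(1),
\]
and a direct Markov bound then yields probability $O(n^{(2-\gamma)\eta-\varepsilon+o(1)})$, where the $o(1)$ absorbs the slowly varying factor $L(x)=n^{o(1)}$.

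Applying the union bound and balancing the exponents finishes the argument: the overall failure probability is $O(n^{-\min(\gamma\eta,\,\varepsilon-(2-\gamma)\eta)+o(1)})$, and choosing $\eta=\varepsilon/(\gamma+2)$ makes the first exponent equal to $\gamma\varepsilon/(\gamma+2)$ and the second equal to $2\gamma\varepsilon/(\gamma+2)$, so the first is binding and delivers the claim for every $\delta<\gamma\varepsilon/(\gamma+2)$. The only mild nuisance is bookkeeping the slowly varying factor, which is precisely why the statement demands a strict inequality on $\delta$ rather than equality; otherwise the calculation is entirely driven by Lemmas~\ref{lem:2} and~\ref{lem:ximax}.
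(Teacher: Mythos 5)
Your proof is correct, but it is organized differently from the paper's. The paper also splits off the top of the degree sequence, but it cuts \emph{below} the typical maximum, at $x=n^{1/\gamma-\varphi}$ with $\varphi<\varepsilon/(\gamma+2)$: it applies the Chebyshev-type concentration of Lemma~\ref{lem:2} to $\bar S_{n,2}(n^{1/\gamma-\varphi})$ (which is exactly where that bound is still nonvacuous, as you correctly observe), and then controls the vertices with degrees in $(n^{1/\gamma-\varphi},\,n^{1/\gamma+\varphi}]$ by the crude product $\xi_{max}^2\cdot S_{n,0}(n^{1/\gamma-\varphi})$, using Lemma~\ref{lem:1} for the count and Lemma~\ref{lem:ximax} for the maximum. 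You instead cut \emph{above} the typical maximum, at $x=n^{1/\gamma+\eta}$, note that $\bar S_{n,2}(\infty)=\bar S_{n,2}(x)$ on the event $\{\xi_{max}\le x\}$, and then need only the expectation asymptotics of Lemma~\ref{lem:2} plus a first-moment Markov bound. Your route is more elementary (no second-moment concentration, no three-term decomposition, no use of Lemma~\ref{lem:1}) and your exponent bookkeeping is right: with $\eta=\varepsilon/(\gamma+2)$ the max-degree event contributes $O(n^{-\delta})$ for any $\delta<\gamma\varepsilon/(\gamma+2)$ while the Markov term decays at the faster rate $n^{-2\gamma\varepsilon/(\gamma+2)+o(1)}$, so the former is binding, matching the stated range of $\delta$. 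What the paper's heavier machinery buys is a two-sided statement ($(1\pm\varepsilon)$-concentration of the bulk around its mean), which is more than this lemma needs but is the form in which Lemma~\ref{lem:2} is reused elsewhere; for the one-sided upper bound required here, your Markov argument suffices and is cleaner.
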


\begin{proof}

Choose $\varphi$ such that $\frac{\delta}{\gamma} < \varphi < \frac{\varepsilon}{\gamma+2}$.
From Lemma~\ref{lem:ximax} we get
$$
\Prob(\xi_{max} \le n^{\frac{1}{\gamma}+\varphi}) = 1 - O\left(n^{-\delta}\right)\,.
$$
From Lemma~\ref{lem:1} and Lemma~\ref{lem:2}, with probability
$$
1 - O \left(  \frac {n^{\gamma\left(\frac 1 \gamma - \varphi\right)}}{n \, L\left(n^{\frac 1 \gamma - \varphi}\right)}  \right) = 1 - O\left(n^{-\delta}\right)
$$
we have
$$
\bar S_{n,2}\left(n^{\frac 1 \gamma - \varphi}\right) \le (1+\varepsilon) \frac{\gamma}{2 - \gamma} n^{\frac 2 \gamma + \varphi \gamma - 2 \varphi} L\left(n^{\frac 1 \gamma - \varphi}\right)\,,
$$
$$
S_{n,0}\left(n^{\frac 1 \gamma - \varphi}\right) \le (1+\varepsilon) \, n^{\varphi \gamma} L\left(n^{\frac 1 \gamma - \varphi}\right)\,.
$$

In this case,
\begin{multline*}
\bar S_{n,2} \left(n^{\frac 1 \gamma + \varphi}\right)
\le \bar S_{n,2} \left(n^{\frac 1 \gamma - \varphi}\right) + \xi_{max} \, S_{n,0}\left(n^{\frac 1 \gamma - \varphi}\right)
\\
\le (1+\varepsilon) \frac{\gamma}{2 - \gamma} n^{\frac 2 \gamma + \varphi \gamma - 2 \varphi} L\left(n^{\frac 1 \gamma - \varphi}\right) +
n^{\frac{2}{\gamma}+2\varphi} (1+\varepsilon) \, n^{\varphi \gamma} L\left(n^{\frac 1 \gamma - \varphi}\right) \le
n^{\frac{2}{\gamma}+\varepsilon}
\end{multline*}
for large enough $n$.
This concludes the proof.

\end{proof}

Note that we estimated only the upper bound for $\bar S_{n,2}(\infty)$, since the lower bound can be obtained using the lower bound for $\xi_{max}$. Here we may use the inequality $S_{n,2}(\infty) \ge \xi_{max}^2$.

\section{Clustering in unweighted graphs}\label{sec:unweighted}

\subsection{Previous results}

The behavior of the global clustering coefficient in scale-free unweighted graphs was considered in \cite{GlobalClust}.
In the case of an infinite variance, the reasonable question is whether there exists a simple graph (i.e., a graph without loops and multiple edges) with a given degree distribution.
The following theorem is proved in \cite{GlobalClust}.

\begin{thm}\label{thm:existence}
With hight probability
there exists a simple graph on $n$ vertices with the degree distribution defined in Section~\ref{sec:graph}.
\end{thm}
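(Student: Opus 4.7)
The plan is to verify with high probability the Erd\H{o}s--Gallai criterion for the random sequence $\xi_1,\ldots,\xi_n$. Recall that a non-increasing sequence $d_1\ge\cdots\ge d_n$ of positive integers with even sum is graphical if and only if
\[
\sum_{i=1}^k d_i \le k(k-1) + \sum_{i=k+1}^n \min(d_i,k)\qquad \text{for every } k\in\{1,\ldots,n\}.
\]
The evenness of $\sum_i\xi_i$ is handled by the modification in Section~\ref{sec:graph}, so it suffices to establish the inequality above, for all $k$ simultaneously, on an event of probability $1-o(1)$.

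On this good event, I would use Lemma~\ref{lem:ximax} and Lemma~\ref{lem:1} (with $c=0$ and $c=1$) to guarantee that $\xi_{max}\le n^{1/\gamma+\varepsilon}$ and that, on a geometric grid of thresholds $x$, the count $N_x=|\{i:\xi_i>x\}|$ equals $(1+o(1))\,nL(x)x^{-\gamma}$ and the tail sum $S_{n,1}(x)$ equals $(1+o(1))\,\frac{\gamma}{\gamma-1}\,n\,x^{1-\gamma}L(x)$; together with $\sum_i\xi_i=(1+o(1))\,n\,\E\xi$, which follows because $\gamma>1$. Ordering the degrees as $d_1\ge\cdots\ge d_n$ and setting $k^{\star}=N_k$, these estimates pin down the partial sums $\sum_{i\le k}d_i$ and the relation $k^{\star}\approx n L(k)k^{-\gamma}$.

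I would then split by the value of $k$ around the critical scale $k_0:=n^{1/(\gamma+1)}$, at which $d_k\approx k$ and $k^{\star}\approx k$. For $k$ sufficiently below $k_0$ we have $k^{\star}\gg k$, hence $\sum_{i>k}\min(d_i,k)\ge k(k^{\star}-k)$ is of order $n\,k^{1-\gamma}$, which beats $\sum_{i\le k}d_i$ of order $n^{1/\gamma}k^{1-1/\gamma}$ by a positive power of $k_0/k$. For $k$ sufficiently above $k_0$ we have $k^{\star}\le k$, so $\min(d_i,k)=d_i$ on $\{i>k\}$ and the criterion rearranges to $2\sum_{i\le k}d_i\le k(k-1)+\sum_i\xi_i$; here $k(k-1)\ge n^{2/(\gamma+1)}$ together with the linear bound $\sum_i\xi_i=O(n)$ dominates the left side, which is itself bounded by $S_{n,1}(k)+k^2=O(nk^{1-\gamma})+O(k^2)$. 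A union bound over a logarithmic grid of $k$-values, combined with monotonicity to interpolate between adjacent grid points, then concludes the argument.

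The main obstacle is the transition regime $k\asymp k_0$, where both $k(k-1)$ and $k(k^{\star}-k)$ are of the same order as $\sum_{i\le k}d_i$ and none of them dominates on its own; there the sharp constants from Lemma~\ref{lem:1} must be tracked. It is precisely here that the assumption $\gamma>1$ is essential: the finite mean $\E\xi$ keeps $\sum_i\xi_i$ linear in $n$, which after rearrangement absorbs the leading-order terms on the left at the transition and thereby rescues the criterion.
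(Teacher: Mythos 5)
First, note that the paper does not actually prove this statement: Theorem~\ref{thm:existence} is quoted from \cite{GlobalClust}, and the only related argument in the present paper is the Erd\H{o}s--Gallai verification embedded in the proof of Theorem~\ref{thm:lowerbound} (condition~\eqref{eq:degcondition}), which the author says is ``similar to Theorem~\ref{thm:existence}.'' Your strategy --- verify the Erd\H{o}s--Gallai criterion with high probability using Lemma~\ref{lem:ximax} and the concentration of $S_{n,0}$, $S_{n,1}$ from Lemma~\ref{lem:1}, plus the linear growth of $\sum_i\xi_i$ --- is exactly the strategy the paper uses in that analogous place, so in spirit you are on the same route. The difference is in the case analysis. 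The paper splits at $k\asymp\sqrt n$: for $k>C\sqrt n$ the term $k(k-1)\ge C^2n$ alone dominates the total degree sum $\sim n\,\E\xi$; for $k\le C\sqrt n$ it uses only the trivial bound $\min(d_i,\cdot)\ge 1$, so the right-hand side is at least $n-C\sqrt n$, while the left-hand side is bounded by $S_{n,1}(n^{\delta})=O\bigl(n^{1-\delta(\gamma-1)}\bigr)=o(n)$. This makes your ``main obstacle'' disappear: at the scale $k\asymp n^{1/(\gamma+1)}$ you compare $\sum_{i\le k}d_i\asymp n^{2/(\gamma+1)}=o(n)$ only against $k(k-1)$ and $k(k^{\star}-k)$, but the neglected contribution of the $n-k^{\star}$ low-degree vertices, each contributing at least $1$ to $\sum_{i>k}\min(d_i,k)$, already supplies a term of order $n$ on the right. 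No sharp constants need to be tracked anywhere, and the critical scale $n^{1/(\gamma+1)}$ plays no role in the existence proof (it only matters for counting triangles). So your outline is correct and would go through, but the transition regime you flag as delicate is in fact the easiest part once the $\ge 1$ bound is used; conversely, your finer estimate $\sum_{i>k}\min(d_i,k)\ge k(k^{\star}-k)$ is not needed here.
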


So, with high probability such a graph exists and it is reasonable to discuss its global clustering coefficient.
The following upper bound on the global clustering coefficient is obtained in \cite{GlobalClust}.

\begin{thm}\label{thm:cluster}
For any $\varepsilon > 0$ with high probability the global clustering coefficient satisfies the following inequality
$$
C_1(G_n) \le n^{ - \frac{\left(\gamma - 2\right)^2}{2\gamma}+\varepsilon}\,.
$$
\end{thm}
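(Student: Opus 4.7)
The plan is to bound the number of triangles $T(G_n)$ from above via a degree-threshold argument, bound the number of adjacent edge pairs $P_2(G_n)$ from below using the largest degree $\xi_{max}$, and then combine the two through $C_1(G_n) = 3T(G_n)/P_2(G_n)$. I would choose the threshold $D=\sqrt n$, which turns out to make the resulting exponent come out to $-(\gamma-2)^2/(2\gamma)$.

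For the numerator, every triangle in $G_n$ either has all three of its vertices of degree strictly greater than $D$, or contains some vertex $v$ with $\xi_v\le D$; in the latter case the other two vertices form a pair of neighbors of $v$, so that case contributes at most $\binom{\xi_v}{2}$ to the triangle count. Summing gives
$$T(G_n)\le \binom{S_{n,0}(D)}{3}+\frac{1}{2}\,\bar S_{n,2}(D).$$
I would then apply Lemma~\ref{lem:1} with $c=0$ and $x=\sqrt n$ to get $S_{n,0}(\sqrt n)=n^{1-\gamma/2+o(1)}$ with high probability, and Lemma~\ref{lem:2} with $c=2$ (legitimate since $c>\gamma$) and $x=\sqrt n$ to get $\bar S_{n,2}(\sqrt n)=n^{2-\gamma/2+o(1)}$ with high probability. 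The concentration probabilities in both lemmas tend to $1$ because $D^\gamma/(nL(D))=n^{\gamma/2-1+o(1)}\to 0$ under $\gamma<2$. Since $3(1-\gamma/2)\le 2-\gamma/2$ for $\gamma\ge 1$, the second term dominates and I obtain $T(G_n)\le n^{2-\gamma/2+\varepsilon/2}$ whp.

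For the denominator, I would use the trivial bound $P_2(G_n)=\tfrac{1}{2}\sum_i\xi_i(\xi_i-1)\ge \tfrac{1}{2}\xi_{max}(\xi_{max}-1)$ combined with the lower bound in Lemma~\ref{lem:ximax}, giving $\xi_{max}\ge n^{1/\gamma-\varepsilon/4}$ whp and hence $P_2(G_n)\ge n^{2/\gamma-\varepsilon/2}$ whp. A union bound over the $O(1)$ many high-probability events then yields
$$C_1(G_n)=\frac{3T(G_n)}{P_2(G_n)}\le n^{(2-\gamma/2)-2/\gamma+\varepsilon}=n^{-(\gamma-2)^2/(2\gamma)+\varepsilon},$$
using the algebraic identity $2-\gamma/2-2/\gamma=-(\gamma-2)^2/(2\gamma)$.

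The main obstacles are largely bookkeeping: (i) verifying that the concentration tails in Lemmas~\ref{lem:1} and \ref{lem:2} actually vanish at the choice $D=\sqrt n$ (which rests precisely on $\gamma<2$), and (ii) absorbing the slowly varying factor $L(\sqrt n)=n^{o(1)}$ into the extra $\varepsilon$ in the exponent. The one step that is easy to mis-handle is the bookkeeping of the split: one must remember that a triangle with several low-degree vertices may be counted more than once on the right-hand side of the triangle bound, but this only helps since we are proving an upper bound.
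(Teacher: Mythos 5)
Your argument is correct, and it is essentially the paper's own method: the paper does not re-prove this theorem (it is quoted from \cite{GlobalClust}), but the proof it gives for the sharper Theorem~\ref{thm:upperbound} uses exactly your decomposition $T(n)\le |\{i:\xi_i>x\}|^3+\bar S_{n,2}(x)$ together with $P_2(n)\ge \xi_{max}(\xi_{max}-1)/2$ and Lemmas~\ref{lem:1}, \ref{lem:2}, \ref{lem:ximax}. The only difference is the threshold: your choice $x=\sqrt n$ recovers the exponent $-(\gamma-2)^2/(2\gamma)$ stated here, while the paper optimizes to $x=n^{1/(\gamma+1)}$, which balances the two terms at $n^{3/(\gamma+1)}$ and yields the strictly better exponent $-(2-\gamma)/(\gamma(\gamma+1))$ for $1<\gamma<2$.
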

Taking small enough $\varepsilon$ one can see that with high probability $C_1(G_n) \to 0$ as $n$ grows.

In addition, using simulations and empirical observations, the authors of \cite{GlobalClust} claimed that with high probability there exists a graph with $\sim n^{\frac{3}{\gamma+1}}$ triangles and with the required degree distribution, while the theoretical upper bound on the number of triangles is $n^{2 - \frac \gamma 2}$.
For the considered case $1 < \gamma < 2$ we have $\frac{3}{\gamma+1} < 2 - \frac \gamma 2$ and there is a gap between the number of constructed triangles and the obtained upper bound.

Further in this section we close this gap by improving the upper bound.
We also rigorously prove the lower bound.

\subsection{Upper bound}

We prove the following theorem.

\begin{thm}\label{thm:upperbound}
For any $\varepsilon>0$ and any $\alpha$ such that $0<\alpha<{\frac{1}{\gamma+1}}$ with probability $1-O(n^{-\alpha})$ the global clustering coefficient satisfies the following inequality
$$
C_1(G_n) \le n^{ - \frac{\left(2 - \gamma \right)}{\gamma(\gamma+1)}+\varepsilon}\,.
$$
\end{thm}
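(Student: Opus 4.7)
The plan is to improve the previous bound by carefully splitting the triangle count according to a degree threshold and then plugging in the optimal choice of threshold. Fix a parameter $x=x(n)$ to be chosen later, and partition the vertices into the ``heavy'' set $L=\{i:\xi_i>x\}$ and the ``light'' set $S=\{i:\xi_i\le x\}$. Write $T(G_n)=T_L+T_{\mathrm{mix}}$, where $T_L$ counts triangles whose three vertices all lie in $L$, and $T_{\mathrm{mix}}$ counts triangles that contain at least one vertex from $S$. Since we work in a simple graph, trivially $T_L\le\binom{|L|}{3}\le |L|^3$. For the second term, every triangle in $T_{\mathrm{mix}}$ passes through at least one vertex $v\in S$ and is determined by $v$ together with an edge between two of its neighbors, so $T_{\mathrm{mix}}\le\sum_{v\in S}\binom{\xi_v}{2}\le \tfrac12\,\bar S_{n,2}(x)$.

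Next I would turn to the probabilistic estimates. By Lemma~\ref{lem:1} applied with $c=0$, $|L|=S_{n,0}(x)\sim n\,x^{-\gamma}L(x)$ with a failure probability of order $(x^\gamma/(nL(x)))^{\beta-1}$ for any $\beta<2$; by Lemma~\ref{lem:2} with $c=2$ (legal because $2>\gamma$), $\bar S_{n,2}(x)\sim \tfrac{\gamma}{2-\gamma}\,n\,x^{2-\gamma}L(x)$ with a similar failure probability. Combining,
\begin{equation*}
T(G_n)\;\le\;O\!\bigl(n^{3}x^{-3\gamma}L(x)^{3}\bigr)\;+\;O\!\bigl(n\,x^{2-\gamma}L(x)\bigr).
\end{equation*}
The two terms balance when $x^{2+2\gamma}\asymp n^{2}$, i.e.\ $x=n^{1/(\gamma+1)}$, and both terms are then of order $n^{3/(\gamma+1)}$ up to slowly varying factors, which can be absorbed into an arbitrary $n^{\varepsilon}$ slack.

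For the denominator, observe that the number of pairs of adjacent edges equals $P_2=\sum_i \binom{\xi_i}{2}\ge \tfrac12\,\xi_{\max}^{2}$ for $\xi_{\max}\ge 2$, and by Lemma~\ref{lem:ximax} we have $\xi_{\max}\ge n^{1/\gamma-\varepsilon}$ with probability $1-O(n^{-\alpha})$ for any $\alpha>0$. Putting the pieces together,
\begin{equation*}
C_1(G_n)=\frac{3\,T(G_n)}{P_2}\;\le\; n^{\,3/(\gamma+1)-2/\gamma+O(\varepsilon)}=n^{\,-\frac{2-\gamma}{\gamma(\gamma+1)}+O(\varepsilon)}
\end{equation*}
with the required probability, after redefining $\varepsilon$.

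The main obstacle is quantitative: the failure probability in Lemma~\ref{lem:1} at $x=n^{1/(\gamma+1)}$ is $O(n^{-(\beta-1)/(\gamma+1)})$, so to reach an exponent $\alpha$ arbitrarily close to $1/(\gamma+1)$ I must take $\beta$ as close to $2$ as possible, which is exactly the edge of the range in which Lemma~\ref{lem:moments} applies; this is what forces the precise restriction $0<\alpha<1/(\gamma+1)$ in the statement. The rest of the argument is a careful bookkeeping of the $(1+o(1))$ factors from the slowly varying $L(\cdot)$, which are harmless once an $n^{\varepsilon}$ buffer is present.
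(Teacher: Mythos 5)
Your proposal is correct and follows essentially the same route as the paper: the same split of triangles into those entirely within the heavy set $\{i:\xi_i>x\}$ (bounded by $|L|^3$) and those touching a light vertex (bounded by $\bar S_{n,2}(x)$), the same threshold $x=n^{1/(\gamma+1)}$ obtained by balancing the two terms, and the same lower bound $P_2(n)\ge\xi_{\max}(\xi_{\max}-1)/2$ via Lemma~\ref{lem:ximax}. Your closing remark about where the restriction $\alpha<\frac{1}{\gamma+1}$ comes from is also accurate (though for $c=0$ one may simply take $\beta=2$ in Lemma~\ref{lem:1}, so the binding constraint is really the $O(x^\gamma/(nL(x)))$ failure probability of Lemma~\ref{lem:2}).
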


\begin{proof}

The global clustering coefficient is
$$
C_1(G_n) = \frac{3 \cdot T(n) }{P_2(n)},
$$
where $T(n)$ is the number of triangles and $P_2(n)$ is the number of pairs of adjacent edges in $G_n$.

Since $P_2(n) \ge \xi_{max}(\xi_{max} - 1)/2$. Therefore, from Lemma~\ref{lem:ximax} we get that for any $\delta > 0$ with probability $1 - O\left(n^{-\alpha}\right)$
$$
P_2(n) > n^{\frac{2}{\gamma}-\delta}\,.
$$

It remains to estimate $T(n)$. Obviously, for any $x$
\begin{equation}\label{eq:T(n)}
T(n) \le \left|\{i: \xi_i > x\}\right|^3  + \sum_{i: \xi_i \le x} \xi_i^2\,.
\end{equation}
The first term in \eqref{eq:T(n)} is the upper bound for the number of triangles with all vertices among the set $\{i: \xi_i > x\}$. The second term is the upper bound for the number of triangles with at least one vertex among $\{i: \xi_i \le x\}$.

From Lemma~\ref{lem:1} and Lemma~\ref{lem:2} we get
$$
\left|\{i: \xi_i > x\}\right| = S_{n,0}(x) \le (1+\varepsilon) n\,x^{ - \gamma} L\left(x\right) \,,
$$
$$
\sum_{i: \xi_i \le x} \xi_i^2 = \bar S_{n,2}(x) \le (1+\varepsilon)  \frac{\gamma}{2 - \gamma}n \, x^{2 - \gamma} L\left(x\right)
$$
with probability $1 - O \left( \frac {x^\gamma}{n \, L\left(x\right)} \right)$\,.

Now we can fix $x = n^\frac{1}{\gamma+1}$. So, with probability
$$
1 - O \left( \frac {n^{-\frac{1}{\gamma+1}}}{L\left(n^\frac{1}{\gamma+1}\right)} \right) = 1-O(n^{-\alpha})
$$
we have
$$
T(n) \le n^{\frac{3}{\gamma + 1}+\delta}
$$

Taking small enough $\delta$, we obtain
$$
C_1(G_n) \le n^{\varepsilon-\frac{2 - \gamma}{\gamma(\gamma+1)}}\,.
$$
This concludes the proof.

\end{proof}

\subsection{Lower bound}


We prove the following theorem.

\begin{thm}\label{thm:lowerbound}
For any $\varepsilon>0$ and any $\alpha$ such that $0<\alpha<min\{\frac{\gamma\varepsilon}{\gamma+2},\frac{1}{\gamma+1},\gamma-1\}$ with probability $1-O(n^{-\alpha})$ there exists a graph with the required degree distribution and the global clustering coefficient satisfying the following inequality
$$
C_1(G_n) \ge n^{- \frac{\left(2 - \gamma \right)}{\gamma(\gamma+1)} - \varepsilon}\,.
$$
\end{thm}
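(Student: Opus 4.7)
The strategy is to match the upper bound of Theorem~\ref{thm:upperbound} by building, for a typical realization of $(\xi_1,\ldots,\xi_n)$, a simple graph whose triangle count is $\Omega(n^{3/(\gamma+1)})$, and then combining this with the bound on $P_2(n)$ supplied by Lemma~\ref{lem:p2upper}. The guiding idea, which matches the heuristic from \cite{GlobalClust} mentioned in Section~\ref{sec:unweighted}.1, is that essentially all of the triangles should come from a clique on the set of high-degree vertices.

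Accordingly, I would fix a threshold $x=x(n)$ of order $n^{1/(\gamma+1)}$, chosen (up to a slowly varying correction) so that $nL(x)=x^{\gamma+1}$, and let $H=\{i:\xi_i>x\}$. By Lemma~\ref{lem:1} with $c=0$, with probability $1-O(n^{-\alpha})$ for any $\alpha<1/(\gamma+1)$ one has $|H|=(1+o(1))n\,x^{-\gamma}L(x)=(1+o(1))x$, and in particular $|H|-1\le\xi_i$ for every $i\in H$. I would then construct the graph in two stages: first, install all $\binom{|H|}{2}$ edges inside $H$ (legal because the residual degrees stay non-negative); second, realize the residual degree sequence---where every $i\in H$ has degree $\xi_i-|H|+1$ and every $j\notin H$ keeps degree $\xi_j$---by a simple graph whose edges avoid all pairs inside $H$. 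The latter is a deterministic graphicality question, and I would verify the relevant Erd\H os--Gallai inequalities for the residual sequence by using Lemmas~\ref{lem:1} and~\ref{lem:2} to control the tail sum $S_{n,1}(x)$ and the low-degree bulk, in the same spirit as the proof of Theorem~\ref{thm:existence}.

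Once such a graph is produced, the clique on $H$ alone contributes $\binom{|H|}{3}=(1+o(1))|H|^3/6$ triangles, giving $T(n)\ge(1-o(1))|H|^3/6=\Omega(n^{3/(\gamma+1)})$. For the denominator, $P_2(n)=\sum_i\binom{\xi_i}{2}\le\tfrac12\bar S_{n,2}(\infty)\le\tfrac12 n^{2/\gamma+\varepsilon}$ with probability $1-O(n^{-\alpha})$ for $\alpha<\gamma\varepsilon/(\gamma+2)$, directly from Lemma~\ref{lem:p2upper}. Putting these estimates together,
\[
C_1(G_n)=\frac{3T(n)}{P_2(n)}\ge n^{3/(\gamma+1)-2/\gamma-\varepsilon}=n^{-(2-\gamma)/(\gamma(\gamma+1))-\varepsilon},
\]
after absorbing the slowly varying factor $L(x)=n^{o(1)}$ into $\varepsilon$.

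The main obstacle is the second stage of the construction: showing that the residual sequence is realizable as a simple graph disjoint from the pre-built clique on $H$, with the stated failure probability. The constraint $\alpha<\gamma-1$ in the theorem enters here, because the Erd\H os--Gallai check for the bipartite part between $H$ and $V\setminus H$ requires $\sum_{i\in H}(\xi_i-|H|+1)$ to lie well below $\sum_{j\notin H}\xi_j$, and the needed concentration of the total-degree sum via Lemma~\ref{lem:1} with $c=1$ yields a polynomial failure probability with exponent $\beta-1$ for any $\beta<\gamma$, hence at best $n^{-(\gamma-1-\delta)}$. Once this realization step is established, everything else is the routine plug-in of the estimates above.
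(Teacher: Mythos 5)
Your proposal follows essentially the same route as the paper: build a clique on the set of high-degree vertices to get $\Omega(n^{3/(\gamma+1)})$ triangles, bound $P_2(n)$ by $\bar S_{n,2}(\infty) \le n^{2/\gamma+\varepsilon'}$ via Lemma~\ref{lem:p2upper}, and verify realizability of the residual degree sequence with Erd\H{o}s--Gallai (the paper does this by collapsing the clique into a single vertex of degree $S_{n,1}(\cdot) - 2\binom{|A|}{2}$ and checking the inequality separately for $k > C\sqrt{n}$ and $k \le C\sqrt{n}$).

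One genuine slip to fix: with your threshold $x$ solving $nL(x)=x^{\gamma+1}$, Lemma~\ref{lem:1} only guarantees $|H|\le(1+\varepsilon)x$ for a \emph{fixed} $\varepsilon>0$, not $|H|\le x+1$; a vertex of degree just above $x$ then need not have degree at least $|H|-1$, so the clique cannot be installed with all residual degrees non-negative, as you claim ``in particular.'' The paper avoids this by placing the threshold at $n^{1/(\gamma+1)+\delta}$, so that $|A|\approx n^{1/(\gamma+1)-\gamma\delta}$ is polynomially smaller than the minimum degree in $A$, and the resulting loss of $n^{-3\gamma\delta}$ in $\binom{|A|}{3}$ is absorbed into the $\varepsilon$ in the final exponent. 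With that adjustment (or with the more delicate choice of $x_0$ used in the weighted case, where one arranges $(1+\varepsilon)\,n\,x_0^{-\gamma}L(x_0)\le x_0$ directly), your argument matches the paper's.
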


\begin{proof}

Again,
$$
C_1(G_n) = \frac{3 \cdot T(n) }{P_2(n)}\,.
$$

The upper bound for $P_2(n)$ follows from Lemma~\ref{lem:p2upper}.
Fix $\varepsilon'$ such that $\frac{\alpha(\gamma+2)}{\gamma}< \varepsilon' < \varepsilon$.
Then,
$$
\Prob \left( P_2(n) \le n^{\frac{2}{\gamma} + \varepsilon'} \right) \ge \Prob \left(\bar S_{n,2}(\infty) \le n^{\frac{2}{\gamma} + \varepsilon'} \right) = 1 - O\left(n^{-\alpha}\right)\,.
$$

\end{proof}

Now we present the lower bound for $T(n)$.
Fix any $\delta$ such that $0<\gamma\delta < \min \left\{ \frac{1}{\gamma+1} - \alpha, \frac{\varepsilon - \varepsilon'}{3}  \right\}$.
It follows from Lemma~\ref{lem:1} that with probability
$1 - O\left(n^{-\alpha}\right)$
$$
S_{n,0}\left(n^{\frac{1}{\gamma+1}+\delta}\right) \le (1+\varepsilon) \,  n^{ \frac{1}{\gamma+1}-\gamma\delta} L\left(n^{\frac{1}{\gamma+1}+\delta}\right) \le n^{\frac{1}{\gamma+1} + \delta} \,.
$$
Let us denote by $A$ the set of vertices whose degrees are greater than $n^{\frac{1}{\gamma+1}+\delta}$. The size of $A$ equals
 $S_{n,0}\left(n^{\frac{1}{\gamma+1}+\delta}\right)$.
Since the number of vertices in $A$ is not greater than the minimum degree in $A$, a clique on $A$ can be constructed.
Therefore, with probability $1 - O\left(n^{-\alpha}\right)$
$$
S_{n,0}\left(n^{\frac{1}{\gamma+1}+\delta}\right) \ge (1-\varepsilon) \,  n^{ \frac{1}{\gamma+1}-\gamma\delta} L\left(n^{\frac{1}{\gamma+1}+\delta}\right)
$$
and
$$
3 T(n) \ge 3 {S_{n,0}\left(n^{\frac{1}{\gamma+1}+\delta}\right)\choose 3} \ge  n^{ \frac{3}{\gamma+1}-(\varepsilon - \varepsilon')}\,.
$$
Finally,  we get
$$
C_1(G_n) = \frac{3 \cdot T(n) }{P_2(n)} \ge \frac{n^{ \frac{3}{\gamma+1}-(\varepsilon - \varepsilon')}}{n^{\frac{2}{\gamma} + \varepsilon'}} = n^{- \frac{\left(2 - \gamma \right)}{\gamma(\gamma+1)} - \varepsilon} \,.
$$

It remains to prove that after we constructed a clique on the set $A$, with high probability we still can construct a graph without loops and multiple edges.
This can be easily proved similarly to Theorem~\ref{thm:existence}.
Namely, we use the following theorem by Erd\H{o}s and Gallai~\cite{ErdosGallai}.

\begin{thm}[Erd\H{o}s--Gallai]
A sequence of non-negative integers ${d_1 \geq \ldots \geq d_n}$ can be represented as the degree sequence of a finite simple graph on $n$ vertices if and only if
\begin{enumerate}
\item $d_1+\ldots+d_n$ is even;
\item $\sum^{k}_{i=1}d_i\leq k(k-1)+ \sum^n_{i=k+1} \min(d_i,k)$ holds for $1\leq k\leq n$.
\end{enumerate}
\end{thm}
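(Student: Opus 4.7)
The plan is to prove both directions separately, as is standard for characterization theorems of this type. The necessity direction is a short double-counting argument: suppose a simple graph $G$ on vertices labelled so that $d_1\ge\ldots\ge d_n$ realizes the sequence. Condition (1) is the handshake lemma, since $\sum d_i = 2|E(G)|$. For condition (2), fix $k$ and partition the edges incident to $\{1,\ldots,k\}$ into those with both endpoints in this set and those with exactly one endpoint there. The sum $\sum_{i=1}^{k}d_i$ counts internal edges twice and crossing edges once. There are at most $\binom{k}{2}$ internal edges (contributing at most $k(k-1)$), and for every $j>k$ the vertex $j$ contributes at most $\min(d_j,k)$ crossing edges, since it has $d_j$ neighbours in total and only $k$ available targets. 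Summing yields exactly the stated inequality.

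For sufficiency, I would use induction on $\sum d_i$, reducing the problem in the Havel–Hakimi style. Given a sequence $d_1\ge\ldots\ge d_n$ satisfying (1) and (2), I plan to show that the sequence obtained by subtracting $1$ from each of $d_2,\ldots,d_{d_1+1}$ and setting $d_1:=0$ (then resorting) still satisfies conditions (1) and (2); the inductive hypothesis then produces a realization of the reduced sequence, and one adds back the edges from vertex $1$ to vertices $2,\ldots,d_1+1$ to obtain a realization of the original sequence. The base case, the all-zero sequence, is realized by the edgeless graph.

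The main obstacle, as in every proof of this theorem, is verifying that the Erdős–Gallai inequalities are preserved by the reduction step after re-sorting; a naive check fails because removing a unit from several $d_i$'s and re-ordering can shift which indices lie below the cutoff $k$. The careful way to handle this is to fix an arbitrary prefix length $k$ in the reduced sequence and split into the cases $k < d_1$, $k = d_1$, and $k > d_1$, in each case bounding the reduced prefix sum $\sum_{i\le k} d_i'$ against $k(k-1)+\sum_{j>k}\min(d_j',k)$ by exploiting the original inequality at a carefully chosen index and using that the sequence was sorted. An alternative route I would also consider is Choudum's extremal argument: assume a minimum-weight counterexample and build a local edge swap that produces a realization, using condition (2) at the right $k$ to find the swap; this avoids the re-sorting bookkeeping but requires setting up the right exchange lemma. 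Either way, it is precisely this monotonicity-under-reduction step that carries the nontrivial content of the theorem.
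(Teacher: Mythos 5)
The paper offers no proof of this statement to compare against: it is the classical Erd\H{o}s--Gallai theorem of 1960, quoted verbatim and used as a black box (to check that the degree sequence with the clique on $A$ contracted to one high-degree vertex is still graphical). So your proposal has to be judged on its own merits, and it splits cleanly into one finished half and one unfinished half.

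The necessity direction is complete and correct: condition (1) is the handshake lemma, and your double count of the edges meeting $\{1,\dots,k\}$ --- internal edges counted twice and bounded by $k(k-1)$, each outside vertex $j$ contributing at most $\min(d_j,k)$ crossing edges --- yields condition (2) exactly. The sufficiency direction, which is where all the content of the theorem lies, is only a plan. You propose induction on $\sum d_i$ via the Havel--Hakimi reduction, you correctly observe that the whole difficulty is showing the Erd\H{o}s--Gallai inequalities survive the reduction after re-sorting, and you even note that the naive check fails --- but the case analysis you announce ($k<d_1$, $k=d_1$, $k>d_1$) is never carried out, so the inductive step is asserted rather than proved. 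This is a genuine gap, not a routine verification: the re-indexing after subtracting $1$ from $d_2,\dots,d_{d_1+1}$ can move entries across the cutoff $k$ in both directions, and one must pair the reduced inequality at $k$ with the original inequality at a shifted index (and use the evenness and the sortedness) to make the bound close; several published ``proofs'' of this theorem have stumbled exactly here. Note also that the Havel--Hakimi route has a second unstated ingredient: one must first argue (by an exchange/swap argument) that if the reduced sequence is graphical then the original one is, with vertex $1$ adjacent precisely to the next $d_1$ largest-degree vertices. If you want a self-contained argument, Choudum's induction (decrementing two well-chosen entries by one and repairing with a single edge swap), which you mention only in passing, is the shorter path and avoids the re-sorting bookkeeping; but as submitted, your sufficiency proof does not yet exist.
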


Let us order the random variables $\xi_1, \ldots, \xi_n$ and obtain the ordered sequence $d_1 \ge \ldots \ge d_n$.
In order to apply the theorem of Erd\H{o}s and Gallai we assume that the set $A$ is now a single vertex with the degree
$$
deg(A)  = S_{n,1}\left(n^{\frac{1}{\gamma+1}+\delta}\right) - 2 { S_{n,0}\left(n^{\frac{1}{\gamma+1}+\delta}\right) \choose 2}
< S_{n,1}\left(n^{\frac{1}{\gamma+1}+\delta}\right) \,.
$$

It is sufficient to prove that with probability $1 - O(n^{-\alpha})$ the following condition
is satisfied
\begin{equation}\label{eq:degcondition}
deg(A) + \sum^{k}_{i=|A|+1}d_i\leq (k-|A|)(k-|A|+1)+ \sum^n_{i=k+1} \min(d_i,k-|A|+1)
\end{equation}
for all $k \ge |A|$.

Let us now prove that with probability $1 - O(n^{-\alpha})$ this condition is satisfied.
For some large enough $C$ if $k > C \sqrt n$, then
$$
deg(A) + \sum^{k}_{i=|A|+1}d_i\leq (k-|A|)(k-|A|+1)\,.
$$
This holds since with probability $1 - O(n^{-\alpha})$
$$
|A| = S_{n,0}\left(n^{\frac{1}{\gamma+1}+\delta}\right) \le (1+\varepsilon)n^{ \frac{1}{\gamma+1}-\delta} L\left(n^{\frac{1}{\gamma+1}+\delta}\right) \le n^{ \frac{1}{\gamma+1}}\,.
$$
and
the sum of all degrees grows linearly with $n$:
$$
\Prob \left(|S_{n,1}(0) - n\E \xi|  > \frac n 2\, \E \xi \right)
\le \frac{ 4^{\alpha+1} n\, \E|\xi - \E\xi|^{\alpha+1}}{n^{{\alpha+1}} (\E\xi)^{\alpha+1}}
= O\left(n^{-\alpha}\right)\,.
$$
Here we used that $\alpha + 1 < \gamma$.

Finally, consider the case $k \le C \sqrt n$.
Note that $\min(d_i,k-|A|+1) > 1$, so
$$
\sum^n_{i=k+1} \min(d_i,k-|A|+1)   \ge n - C\sqrt{n}\,.
$$
It remains to show that with probability $1 - O(n^{-\alpha})$
$$
deg(A) + \sum^{k}_{i=|A|+1}d_i\leq n - C\sqrt{n}\,.
$$
It is sufficient to show that
$$
\sum_{i=1}^{[C \sqrt n]}d_i\leq n - C\sqrt{n}\,.
$$

This inequality is easy to prove using Lemma~\ref{lem:1}.
For any $\frac{1}{3\gamma} < \delta < \frac{1}{2\gamma}$ with probability $1- O\left(n^{-1/2}\right)$ we have
$$
S_{n,0}\left(n^\delta\right) > C \sqrt n
$$
and
$$
S_{n,1}\left(n^\delta\right) \le n^{\frac{2\gamma+1}{3\gamma}}\leq n - C\sqrt{n}\,.
$$
Therefore, the condition~\eqref{eq:degcondition} is satisfied.

\section{Clustering in weighted graphs}\label{sec:weighted}

In this section, we analyze the global clustering coefficient of graphs with multiple edges.
First, let us note that the case when we allow both loops and multiple edges is not very interesting: we can get a high clustering coefficient just by avoiding triplets.
Namely, we can construct several triangles and then just create loops in all vertices.
Then, we can connect the remaining half-edges for the vertices with odd degrees.
Therefore, further we assume that loops are not allowed.
We show that even with this restriction it is possible to obtain a constant global clustering coefficient.

Several definitions of the global clustering coefficient for graphs with multiple edges are presented in Section~\ref{sec:clustering}.
The following theorem holds for any definition of the global clustering coefficient $C_1(G_n)$.

\begin{thm}
Fix any $\delta > 0$. For any $\alpha$ such that $0 < \alpha < \frac{\gamma-1}{\gamma+1}$ with probability $1 - O\left(n^{-\alpha}\right)$
there exists
a multigraph with the required degree distribution and the global clustering coefficient satisfying the following inequality
$$
C_1(G_n) \ge \frac{2 - \gamma}{2+\gamma} -\delta\,.
$$
\end{thm}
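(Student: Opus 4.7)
The plan is to exhibit an explicit multigraph on $n$ vertices with the prescribed degree sequence $(\xi_1,\dots,\xi_n)$ whose clustering coefficient meets the claimed bound. Since every edge in a multigraph has weight $1$, all four triplet-value definitions introduced in Section~\ref{sec:clustering} collapse to unweighted counts, and it is enough to bound the ratio (closed wedges)/(wedges).

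For the construction, I would fix a threshold $x = x(n)$ slightly above $n^{1/(\gamma+1)}$ (with the precise value tuned by $\delta$) and set $A = \{i : \xi_i > x\}$. I place every half-edge incident to $A$ inside a multigraph on $A$, giving between $u,v \in A$ approximately $\xi_u\,\xi_v/D$ parallel edges with $D := \sum_{v \in A}\xi_v$, and I pair the remaining half-edges (at vertices outside $A$) among themselves arbitrarily, e.g.\ by the configuration model. Concentration of $|A|$, $D$ and $S_{n,2}(x)$ comes from Lemma~\ref{lem:1}, and control of $\xi_{\max}$ from Lemma~\ref{lem:ximax}; these together give feasibility of the loopless multigraph on $A$, i.e.\ $\xi_{\max} \le D - \xi_{\max}$ with high probability, since $D$ is of strictly larger polynomial order than $\xi_{\max}$ when $\gamma > 1$. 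The choice $x \ge n^{1/(\gamma+1)}$ also makes the average multiplicity $D/\binom{|A|}{2}$ at least one, so that up to small perturbations of the balanced scheme every pair in $A$ is joined by at least one edge.

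To lower-bound $C_1$, the denominator $P_2 = \sum_v\binom{\xi_v}{2}\sim \bar S_{n,2}(\infty)/2$ is written as $(S_{n,2}(x)+\bar S_{n,2}(x))/2$. For the numerator, every wedge centered at $v\in A$ whose two edges go to two distinct other endpoints in $A$ is closed (because every pair in $A$ has at least one edge); the only wedges at $v$ that are not necessarily closed are the ``bad'' ones formed by two parallel edges from $v$ to the same vertex, of number $\sum_{u\in A}\binom{e_{vu}}{2}\sim\xi_v^2\,S_{n,2}(x)/(2D^2)$. Summing these contributions over $v\in A$ gives
$$
C_1(G_n)\;\ge\;\Bigl(1-\frac{S_{n,2}(x)}{D^2}\Bigr)\cdot\frac{S_{n,2}(x)}{\bar S_{n,2}(\infty)}-o(1)\,,
$$
and plugging in $\bar S_{n,2}(x)\sim \frac{\gamma}{2-\gamma}\,n\,x^{2-\gamma}L(x)$ (Lemma~\ref{lem:2}) and $D\sim \frac{\gamma}{\gamma-1}\,n\,x^{1-\gamma}L(x)$ (Lemma~\ref{lem:1}), the right-hand side can be tuned to $(2-\gamma)/(2+\gamma)-\delta$ at the appropriate $x$.

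The main obstacle is that $\E\xi^2 = \infty$ for $\gamma < 2$, so $\bar S_{n,2}(\infty)$ is heavy-tailed and does not concentrate around any deterministic value. The crucial point is to express numerator and denominator in terms of the \emph{same} random quantity (essentially $S_{n,2}(x) = \bar S_{n,2}(\infty) - \bar S_{n,2}(x)$, with $\bar S_{n,2}(x)$ tightly concentrated by Lemma~\ref{lem:2}), so that the heavy-tail fluctuations appear identically in both and cancel in the ratio --- otherwise the $n^{2/\gamma+\varepsilon}$ estimate from Lemma~\ref{lem:p2upper} alone would only yield a bound decaying like $n^{-\varepsilon}$, not a positive constant. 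The probabilistic rate $1-O(n^{-\alpha})$ with $\alpha < (\gamma-1)/(\gamma+1)$ then follows by choosing the exponent $\beta$ in Lemma~\ref{lem:1} appropriately and applying a union bound over the constantly many concentration estimates required.
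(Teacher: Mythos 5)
Your construction is, in spirit, the paper's: take the set $A$ of vertices of degree above a threshold of order $n^{1/(\gamma+1)}$, route all of their half-edges inside $A$ so that every pair of vertices of $A$ is adjacent, and pair the remaining half-edges among the low-degree vertices. (The paper is more careful here: it chooses $x_0$ with $(1+\varepsilon)n x_0^{-\gamma}L(x_0)\le x_0\le(1+2\varepsilon)n x_0^{-\gamma}L(x_0)$, so that $|A|\le x_0\le\min_{v\in A}\xi_v$ and the clique is guaranteed to fit; your ``average multiplicity at least one, up to small perturbations'' is the hand-wavy version of this, but it is fixable.) The real gap is in the analysis of $C_1$. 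Your opening reduction --- ``every edge in a multigraph has weight $1$, so all triplet-value definitions collapse to unweighted counts'' --- is false. In the Opsahl--Panzarasa framework the multigraph is collapsed to a weighted graph whose edge weights are the multiplicities; a triplet is a pair of \emph{distinct} neighbours of a vertex, and its value is the arithmetic/geometric mean, min, max, or product of the two multiplicities. The theorem is claimed for \emph{every} such definition. Your numerator and denominator ($\approx S_{n,2}(x)$ and $\approx\bar S_{n,2}(\infty)$, i.e.\ pairs of half-edges) match none of them uniformly: e.g.\ under the minimum-value definition the total value of closed triplets centred at the hub is of order $|A|^2$, not $\xi_{\max}^2$, because the hub has only $|A|-1$ distinct neighbours. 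The worst case (value at least $1$ per closed triplet) is exactly what produces the constant: the paper bounds the closed-triplet value below by $3\binom{|A|}{3}\approx\tfrac12 x_0^3$, the outside-triplet value above by $\bar S_{n,2}(x_0)\approx\tfrac{\gamma}{2-\gamma}n x_0^{2-\gamma}L(x_0)\approx\tfrac{\gamma}{2-\gamma}x_0^3$ (using the balance $n x_0^{-\gamma}L(x_0)\approx x_0$), and uses that all triplets inside $A$ are closed, giving $\tfrac{1/2}{1/2+\gamma/(2-\gamma)}=\tfrac{2-\gamma}{2+\gamma}$.

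A second, independent problem: even granting your wedge-counting convention, your final expression does not evaluate to the claimed constant, and you never actually carry out the evaluation. Since $S_{n,2}(x)\ge\xi_{\max}^2$, which is of order $n^{2/\gamma}$ with the required probability, while $\bar S_{n,2}(x)$ is of order $n^{3/(\gamma+1)}$ and $2/\gamma>3/(\gamma+1)$ for $\gamma<2$, the ratio $S_{n,2}(x)/\bar S_{n,2}(\infty)=S_{n,2}(x)/(S_{n,2}(x)+\bar S_{n,2}(x))$ tends to $1$, not to $\tfrac{2-\gamma}{2+\gamma}$; the assertion that it ``can be tuned to $(2-\gamma)/(2+\gamma)-\delta$ at the appropriate $x$'' is unsupported, and the constant $\tfrac{2-\gamma}{2+\gamma}$ never emerges from your computation. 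Relatedly, your emphasis on cancelling the heavy-tail fluctuations of $\bar S_{n,2}(\infty)$ is misplaced for this theorem: in the paper's argument the denominator is split as (value of triplets on $A$, all closed) plus (value of triplets outside $A$, bounded by the concentrated quantity $\bar S_{n,2}(x_0)$), and the monotonicity of $a/(a+b)$ in $a$ makes the non-concentrating part harmless. To repair the proof you need to (i) drop the ``collapse to unweighted counts'' claim, (ii) lower-bound the closed-triplet value by $3\binom{|A|}{3}$ so the bound holds for all definitions simultaneously, and (iii) use the balance condition on $x_0$ to turn both $|A|^3$ and $\bar S_{n,2}(x_0)$ into multiples of $x_0^3$.
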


\begin{proof}

Fix some $\varepsilon > 0$. From Lemma~\ref{lem:1} with $c = 0$ it follows that  with probability
$1 - O\left(\frac{x^\gamma}{n \, L(x)}\right)$
\begin{equation}\label{eq:x}
(1-\varepsilon) \, n \, x^{- \gamma} L\left(x\right) \le S_{n,0}(x) \le (1+\varepsilon) \, n \, x^{- \gamma} L\left(x\right)\,.
\end{equation}

Let us prove that for large enough $n$ there always exists such $x_0$ that
\begin{equation}\label{eq:x0}
(1+\varepsilon) \, n \, x_0^{- \gamma} L\left(x_0\right) \le x_0 \le (1+2\varepsilon)\, n \, x_0^{- \gamma} L\left(x_0\right)\,.
\end{equation}
In other words, we want to find such $x_0$ that
$$
\frac{1}{(1+2\varepsilon)\,n} \le \frac{x_0^{ - \gamma} L\left(x_0\right)}{x_0} \le \frac {1}{(1+\varepsilon)\,n} \,.
$$
Recall that $x^{- \gamma} L\left(x\right) = 1 - F(x)$, where $F(x)$ is a cumulative distribution function.
Therefore, $f(x) = \frac{x^{ - \gamma} L\left(x\right)}{x}$ monotonically decreases to zero on $(0,\infty)$.
The only problem is that $f(x)$ is a discontinuous function.
In order to guarantee the existence of the required value $x_0$, we have to prove that (for large enough $n$) if $f(x) < \frac 1 n$, then $f(x+) - f(x-) < \frac{\varepsilon}{(1+\varepsilon)(1+2\varepsilon)n}$.
This can be proved as follows.
For the function $F(x)$ it is obvious that if $1 - F(x) < \frac 1 n$, then $|F(x+) - F(x-)| < \frac{1}{n}$.
Therefore, in this case, $|f(x+) - f(x-)| < \frac{1}{x \, n}$.
For large enough $n$ (and this leads to large enough $x$) we have $\frac{1}{x \, n} \le \frac{\varepsilon}{(1+\varepsilon)(1+2\varepsilon)n}$. This concludes the proof of the fact that the required $x_0$ exists.

We take any value that satisfies Equation~\eqref{eq:x0} and further denote it by $x_0$.
Note that, up to a slowly varying multiplier, $x_0$ is of order $n^{\frac{1}{\gamma+1}}$.
Therefore, $O\left(\frac{x_0^\gamma}{n \, L(x_0)}\right) = O\left(n^{-\alpha}\right)$.
From Equations~\eqref{eq:x} and \eqref{eq:x0} it follows that with probability $1 - O\left(\frac{x_0^\gamma}{n \, L(x_0)}\right)$ the number of vertices with degree greater than $x_0$ (i.e., $S_{n,0}(x_0)$) is not larger than $x_0$.
Denote this set of vertices by $A_{x_0}$.
In this case, a clique on $A_{x_0}$ can be constructed.

In addition, we want all vertices from the set $A_{x_0}$ to be connected only to each other.
This can be possible, since multiple edges are allowed.
If the sum of degrees in $A_{x_0}$ is odd, then we allow one edge (from the vertex with the smallest degree in $A_{x_0}$) to go outside this set.

We are ready to estimate the global clustering coefficient:
$$
C_1(G_n) = \frac{\text{total value of closed triplets}}{\text{total value of triplets}}\,.
$$
The total value of closed triplets is at least $3 {S_{n,0}(x_0) \choose 3 }$
regardless of the definition of the value of a triplet.
With probability $1 - O\left(\frac{x_0^\gamma}{n \, L(x_0)}\right)$
$$
3 {S_{n,0}(x_0) \choose 3 } \ge \frac 1 2 (1-\varepsilon)^3 \, n^3 \, x_0^{- 3\gamma} L^3\left(x_0\right)\,.
$$
The total value of all triplets includes:
\begin{itemize}
\item The total value of closed triplets on $A_{x_0}$ estimated above,
\item The total value of triplets on the remaining vertices, which is not greater than $\bar S_{n,2}(x_0)$,
\item (optionally) Some unclosed triplets on the vertex with the smallest degree in  $A_{x_0}$, if the sum of degrees in $A_{x_0}$ is odd.
\end{itemize}
Since the smallest degree in the set $A_{x_0}$ is of order $x_0$, we can estimate the last two summands in the total value of triplets by
$$
\bar S_{n,2}(x_0) + O\left(x_0^2\right) \le (1 + \varepsilon) \frac{\gamma}{2 - \gamma}n \, x_0^{2 - \gamma} L\left(x_0\right)\,.
$$
By Lemma~\ref{lem:2}, this holds with probability $1 - O\left(\frac{x_0^\gamma}{n \, L(x_0)}\right)$.

Finally, with probability $1 - O\left(\frac{x_0^\gamma}{n \, L(x_0)}\right)$ we have
\begin{multline*}
C_1(G_n) 
\ge 
\frac{\frac 1 2 (1-\varepsilon)^3 \, n^3 \, x_0^{- 3\gamma} L^3\left(x\right)}{\frac 1 2 (1-\varepsilon)^3 \, n^3 \, x_0^{- 3\gamma} L^3\left(x_0\right) + (1 + \varepsilon) \frac{ \gamma}{2 - \gamma}n \, x_0^{2 - \gamma} L\left(x_0\right)} \\
\ge \frac{\frac 1 2 (1-\varepsilon)^3 \, n^2 \, x_0^{- 2\gamma} L^2\left(x_0\right)}{\frac 1 2 (1-\varepsilon)^3 \, n^2 \, x_0^{- 2\gamma} L^2\left(x_0\right) + (1 + \varepsilon) \frac{ \gamma}{2 - \gamma} \, (1+2\varepsilon)^2\, n^2 \, x_0^{- 2 \gamma} L^2\left(x_0\right)}
\\
= \frac{\frac 1 2 (1-\varepsilon)^3 \, }{\frac 1 2 (1-\varepsilon)^3  + (1 + \varepsilon) \frac{ \gamma}{2 - \gamma} \, (1+2\varepsilon)^2} \ge \frac{2 - \gamma}{2+\gamma} -\delta\,.
\end{multline*}
for sufficiently small $\varepsilon$. Here in the second inequality we used Equation~\ref{eq:x0}.

Recall that the loops are not allowed.
Therefore, it remains to prove that 1) a multi-clique on $A_{x_0}$ can be constructed; 2) a graph on the remaining vertices can be constructed.
Note that a multigraph without loops can always be constructed if the maximum degree is not larger than the sum of the other degrees.

A multi-clique on $A_{x_0}$ can be constructed if
\begin{equation}\label{eq:exist}
\xi_{max} \le S_{n,1}(x_0) - \xi_{max} - x_0^2.
\end{equation}
Here $x_0^2$ is the upper bound for the number of half-edges already involved in the required clique.
From Lemma~\ref{lem:1}, with probability $1 - O\left(n^{-\alpha}\right)$
\begin{equation}\label{eq:exist1}
S_{n,1}(x_0) > (1 - \varepsilon) \frac{\gamma}{\gamma-1}n\,x_0^{1 - \gamma} L\left(x_o\right)\,.
\end{equation}

Fix some $\varepsilon'$ such that $0<\varepsilon' < \frac{1}{\gamma} \left( \frac{\gamma-1}{\gamma+1} - \alpha\right)$.
In this case we have $\alpha < \frac{\gamma-1}{\gamma+1} - \varepsilon'\gamma$, therefore Lemma~\ref{lem:ximax} gives that
\begin{equation}\label{eq:exist2}
\Prob\left(\xi_{max} \le n^{\frac{2}{\gamma+1}-\varepsilon'}\right)
= \Prob\left(\xi_{max} \le n^{\frac{1}{\gamma} + \frac{\gamma-1}{\gamma(\gamma+1)}-\varepsilon'}\right)
= 1 - O\left(n^{-\alpha}\right)\,.
\end{equation}
Now Equation~\eqref{eq:exist} follows immediately from~\eqref{eq:exist1}, \eqref{eq:exist2}, and the fact that $x_0$ is of order $n^\frac{1}{\gamma+1}$.

Similarly, it is easy to show that the graph on the remaining vertices can be constructed:
$$
x_0 \le \bar S_{n,1}(x_0) - x_0
$$
since $\bar S_{n,1}(x_0) = S_{n,1}(0) - S_{n,1}(x_0)$ grows linearly with $n$.

\end{proof}

\section{Conclusion}\label{sec:conclusion}

In this paper, we fully analyzed the behavior of the global clustering coefficient in scale-free graphs with an infinite variance of the degree distribution.
We considered both unweighted graphs and graphs with multiple edges.
For the unweighted case, we first obtained the upper bound for the global clustering coefficient.
In particular, we proved that the global clustering coefficient tends to zero with high probability.
We also presented the constructing procedure which allows to reach the obtained upper bound.
The situation turns out to be different for graphs with multiple edges.
In this case, it is possible to construct a sequence of graphs with an asymptotically constant clustering coefficient.

\end{document}